\numberwithin{equation}{section}
\numberwithin{figure}{section}
\newtheorem{thm}{Theorem}[section]
\newtheorem{lem}[thm]{Lemma}
\newtheorem{cor}[thm]{Corollary}
\newcommand{\E}[1]{{\mathbf E}\left[#1\right]}
\newcommand{\p}[1]{{\mathbf P}\left(#1\right)}
\newcommand{\I}[1]{{\mathbf 1}_{[#1]}}
\newcommand{\Bin}[1]{\mathrm{Bin}(#1)}
\newcommand{\Poi}[1]{\mathrm{Poi}(#1)}
\newcommand{\eqdist}{\stackrel{d}{=}}
\title{Limit theorems for Randić index for Erd\H{o}s-Rényi graphs}
\author[1]{Laura Eslava} 
\author[2]{Saylé Sigarreta}
\author[1]{Arno Siri-Jégousse}
\affil[1]{UNAM, IIMAS, Departamento de Probabilidad y Estadística,
México.}
\affil[2]{BUAP, Departamento de Probabilidad y Estadística,
México.}
\begin{document}

\maketitle

\begin{abstract}
We prove that the generalized Randić index over graphs following the Erd\H{o}s-Rényi model, for both the sparse and dense regimes, is concentrated around its mean when the number of vertices tends to infinity.
\end{abstract}

\textbf{Keywords}:  Randić index, Erd\H{o}s-Rényi graphs, topological indices, random graphs.

\textbf{MSC classes}: 05C50, 05C80

\section{Introduction}

A molecular graph, within the domain of chemistry, commonly represents the structure of a molecule graphically, with atoms as nodes and chemical bonds as edges connecting these nodes. Comprehending these structural representations is pivotal in pharmaceuticals for designing drugs, given that the arrangement of atoms and bonds significantly influences the molecule's properties \cite{WCRSNYH,MN}. 
In this sense, a numerical value, irrespective of the particular node and edge labeling, which encapsulates specific characteristics of the molecular graph, is termed a graph invariant or topological index.  Hence, various topological indices and their mathematical properties have been explored \cite{G,GF}. In particular,  the Randić index emerges as the foremost extensively studied, commonly employed, and broadly acknowledged topological index \cite{LG,LS}; this is due to its close correlation with numerous chemical properties, including boiling points, surface area, solubility in water of alkanes \cite{KH,SD,GGS}. It was initially termed as the ``branching index" when introduced by chemist Milan Randić in 1975 \cite{R} as
\begin{equation*}
    R(G)=\sum_{u\sim v } \frac{1}{\sqrt{d_u d_v}};
\end{equation*}
where the sum is over all the pairs of connected vertices $u\sim v$ of a graph $G$ and $d_u,d_v$ stand for their respective degree (i.e. their number of connected neighbours in the graph).  
This index was later generalized by Bollobás and Erd\H{o}s in 1998 \cite{BE}, for $\alpha\in \mathbb{R}$,
\begin{equation}\label{e2}
    R_{\alpha}(G)=\sum_{u\sim v } (d_u d_v)^{\alpha}.
\end{equation}

On the other hand, graph theory has shown to have applicability in representing and modeling various real-world systems as random graphs, as many of their crucial features are challenging to capture using deterministic models. In particular,  a model of random graphs with a fixed number of vertices $n$ is a stochastic mechanism for determining which of the  $n(n - 1)/2$ potential edges actually arise. In 1959, the Erd\H{o}s-Rényi (ER) random graph \cite{ER}
was defined. In this graph, every pair of vertices is independently linked with  probability  $p$. When $p$ remains constant, it is termed as a dense ER graph. Conversely, if $p = O(1/n)$, it is referred to as a sparse ER graph.  The properties of ER graphs have been thoroughly examined, including their largest independent sets, chromatic number, Hamiltonian paths, connected components, automorphisms, among others \cite{JLR}. Additionally,  in terms of applications, ER is essentially akin to an elementary model of an epidemic known as the Reed-Frost model, the findings regarding ER are beneficial for understanding evolutionarily stable strategies and, recently, some slight modifications of the original model serve as a significant foundation to model interactions within communities \cite{PC,SKP}. In simpler terms, communities represent tightly-knit groups of nodes where interactions are frequent among members but less so with nodes outside the community. Hence, community analysis can reveal important patterns, decomposing large collections of interactions into more meaningful components.  In general, although ER graphs are commonly recognized as suboptimal models for interaction networks, their simplicity, mathematical manageability, and relations with alternative models possessing real-world properties render them a valuable asset in network research and analysis.

Under the same line, several topological indices have been investigated within the ER model, employing computational or analytical approaches \cite{LLS, CLH, MMRS, DHHIR}; since, 
for instance, comprehensive knowledge of various indices pertaining to the model could function as indicators of whether a network conforms to the Erd\H{o}s-Rényi random graph model. Similarly, the generalized Randić index has been scrutinized across various random models \cite{FMP, ZW, SSC,SSC2}. Moreover, as points of intersection, the generalized Randić index has been examined within the ER model \cite{Y, MRS,DMRSV}.  Specifically, the generalized Randić index has been employed to estimate graph robustness and, at the same time, ER graphs have served as valuable benchmarks for evaluating graph robustness \cite{DMRSV}. This measure, which indicates a graph's capacity to sustain connectivity amidst node and edge loss, aids in examining the resilience of various systems.
Inspired by the foregoing, the objective of this paper is to examine the asymptotic properties of the generalized Randić index in both dense and sparse scenarios through the application of the second  moment method. Additionally, it aims to corroborate the empirical findings of previous studies and to show consistency with the theoretical results derived through different approaches.

\section{Large number approximations}

In this study, we exclusively focus on simple finite undirected graphs. A simple undirected graph is represented as $G=(V,E)$, where $E\subseteq V\times V$ is a set ensuring that $(v,v)\notin E$ for all $v\in V$, and if $(v,u)\in E$, then $(u,v)\in E$. Here, $V$ signifies the vertex set and $E$ represents the edge set. In a graph $G$, two vertices $u$ and $v$ are considered adjacent, or neighbors, if $(u, v)$ forms an edge in $G$, denoted as $u \sim v$. If all the vertices of $G$ are pairwise adjacent, then $G$ is complete. A complete graph on $n$ vertices is denotaded as $K_n$. The degree (or valency) $d_v$ of a vertex $v$ is the number of edges at $v$; this is equal to the number of neighbours of $v$. The number $\delta(G):=\min \{d_v \mid v \in V\}$ is the minimum degree of $G$, the number $\Delta(G):=\max \{d_v \mid v \in V\}$ its maximum degree. 

We will work with the Erd\H{o}s-Rényi random graph $G_{n,p}$ with parameters $n\in \mathbb{N}$ and $p\in (0,1)$. Vertices are labelled by $[n]=\{1,2,\ldots, n\}$. 
In this work we consider the generalized Randić index  $R_\alpha(G_{n,p})$ with  $\alpha\in \mathbb{R}$, as defined in \eqref{e2}. Recall that for the complete graph, $R_{\alpha}(K_n)=\frac{n(n-1)^{1+2\alpha}}{2}$.

\subsection{Sparse graphs}

Let us first consider the case where $p=\lambda/n$ for $\lambda\in (0,\infty)$.
Define, for a random variable $X=X_\lambda \eqdist \Poi{\lambda}$ and $\alpha\in  \mathbb{R}$,  
\begin{align*}
    c[\alpha,\lambda]=c_{\alpha,\lambda}=\E{(1+X)^{\alpha}}.
\end{align*}

\begin{thm}\label{thm:sparse}
    For  $\lambda\in (0,\infty)$, the generalized Randić index of $G_{n,p}$ with parameter $\alpha\in \mathbb{R}$ is asymptotically concentrated around its mean. More precisely, as $n\to \infty$,     \begin{align*}
        \frac{2R_\alpha(G_{n,p})}{n}=\frac{R_\alpha(G_{n,p})}{R_{-1/2}(K_{n})}\to \lambda  c_{\alpha,\lambda}^2,
    \end{align*}
    in probability. Moreover, we observe that $\lambda c^2_{\alpha,\lambda}$ exhibits the same asymptotic behavior as $\lambda^{1+2\alpha}$ as $\lambda$ approaches infinity.
\end{thm}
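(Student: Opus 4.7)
My plan is the second moment method. Writing
\begin{equation*}
R_\alpha(G_{n,p}) = \sum_{\{u,v\}\in\binom{[n]}{2}} X_{uv}, \qquad X_{uv} := \I{u\sim v}\,(d_u d_v)^\alpha,
\end{equation*}
I compute the first moment by conditioning on $\{u,v\}$ being an edge: the residual degrees $d_u-1$ and $d_v-1$ then depend on disjoint sets of potential edges, so they are independent and each distributed as $\Bin{n-2,p}$. Hence $\E{X_{uv}} = p\,\bigl(\E{(1+\Bin{n-2,p})^\alpha}\bigr)^2$. The Poisson limit $\Bin{n-2,\lambda/n}\todist \Poi{\lambda}$, together with uniform integrability (trivial for $\alpha\le 0$; for $\alpha>0$, dominate $(1+k)^\alpha$ by $(1+k)^{\lceil\alpha\rceil+1}$, whose expectation converges to a polynomial in $\lambda$), yields $\E{R_\alpha(G_{n,p})}\sim (n\lambda/2)\,c_{\alpha,\lambda}^2$, matching the target after dividing by $R_{-1/2}(K_n)=n/2$.

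For concentration I decompose $\E{R_\alpha^2}=\sum_{e,e'}\E{X_e X_{e'}}$ according to $|e\cap e'|\in\{2,1,0\}$. The diagonal contributes $O(n^2 p)=O(n)$, by the same conditional factorisation with $\alpha$ replaced by $2\alpha$. Pairs sharing one vertex number $O(n^3)$; conditioning on both incident edges gives $d_u=2+\Bin{n-3,p}$ and $d_v,d_w=1+O(1)+\Bin{n-3,p}$, hence each $\E{X_e X_{e'}}$ is $O(p^2)=O(n^{-2})$, and the sum is $O(n)$. The delicate case is the $O(n^4)$ disjoint pairs: each $\E{X_e X_{e'}}$ is of size $p^2$, so a crude bound gives a variance contribution of $O(n^2)$, matching the square of the mean; cancellation against $\E{X_e}\E{X_{e'}}$ is therefore essential. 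Conditioning on $e=\{u,v\}$ and $e'=\{w,z\}$, each degree $d_\bullet$ with $\bullet\in\{u,v,w,z\}$ splits as $1$ plus an independent $\Bin{n-4,p}$ variable $S_\bullet$ (edges to the $n-4$ outside vertices), plus the four cross-Bernoullis $\I{u\sim w},\I{u\sim z},\I{v\sim w},\I{v\sim z}$. The $S_\bullet$ are mutually independent, and a first-order expansion in the cross-Bernoullis (each of expectation $p$) shows $\mathrm{Cov}(X_e,X_{e'})=O(p^3)=O(n^{-3})$; summing, this regime also contributes $O(n)$. Altogether $\mathrm{Var}(R_\alpha(G_{n,p}))=O(n)=o(n^2)$, and Chebyshev's inequality yields convergence in probability of $R_\alpha(G_{n,p})/(n/2)$ to $\lambda c_{\alpha,\lambda}^2$.

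For the large-$\lambda$ assertion, the weak law for $\Poi{\lambda}$ gives $(1+X_\lambda)/\lambda\to 1$ in probability as $\lambda\to\infty$; Chernoff-type tail bounds on $X_\lambda$ provide the uniform integrability needed to pass to moments, so $\E{((1+X_\lambda)/\lambda)^\alpha}\to 1$. Hence $c_{\alpha,\lambda}\sim \lambda^\alpha$ and $\lambda c_{\alpha,\lambda}^2\sim \lambda^{1+2\alpha}$.

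The principal obstacle is the disjoint-pair covariance: crudely $\sum_{|e\cap e'|=0}\E{X_e X_{e'}}$ is $\Theta(n^2)$ and matches $(\E{R_\alpha})^2$ in order, so the proof must carefully isolate the shared stochastic structure (the four cross edges and the $n-4$ outside vertices) and extract the extra factor of $p$ from cancellation. A secondary, purely technical step is the uniform integrability of $(1+\Bin{n-2,p})^\alpha$ for large positive $\alpha$, which is controlled by the standard moments of the binomial.
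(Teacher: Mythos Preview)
Your proposal is correct and follows essentially the same route as the paper: second moment method, decompose $\E{R_\alpha^2}$ by $|e\cap e'|$, show the shared-vertex and diagonal terms are $O(n)$, and handle the disjoint-pair term by isolating the four cross-Bernoullis between $\{u,v\}$ and $\{w,z\}$ as the sole source of dependence.

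The one presentational difference worth noting is how the disjoint-pair term is dispatched. You go after a quantitative covariance bound $\mathrm{Cov}(X_e,X_{e'})=O(p^3)$ via a ``first-order expansion in the cross-Bernoullis''; the paper instead proves directly that $\E{(d_1d_2d_3d_4)^\alpha\mid 1\sim2,\,3\sim4}\to c_{\alpha,\lambda}^4$ by conditioning on the event $A=\{\text{at least one cross-edge}\}$: on $A^c$ the four degrees are i.i.d.\ $1+\Bin{n-4,p}$ and the expectation factorises, while the $A$-contribution dies because $\p{A}=O(p)$. This yields $\E{R_\alpha^2}/\E{R_\alpha}^2\to 1$ without ever needing an explicit rate on the covariance, so the expansion you allude to (which for general real $\alpha$ would require a little care to make rigorous) can be bypassed entirely. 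Your approach buys the sharper $\mathrm{Var}(R_\alpha)=O(n)$, but for the theorem as stated the paper's cruder $o(n^2)$ suffices and is cleaner to justify. For the $\lambda\to\infty$ assertion your uniform treatment via WLLN plus Chernoff-based uniform integrability is in fact tidier than the paper's, which splits on the sign of $\alpha$.
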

The proof of Theorem~\ref{thm:sparse} is based on the second moment method and the following asymptotic expressions. 
\begin{lem}\label{lem:degrees}
Let $\lambda\in (0,\infty)$, $\alpha\in \mathbb{R}$ and consider $G_{n,p}$ with $p=\lambda/n$. As $n\to \infty$,    
\begin{align*}
    \E{(d_1d_2)^{\alpha}|1\sim 2}
    \to  c_{\alpha,\lambda}^2 \qquad \text{and} \qquad
    \E{(d_1d_2d_3d_4)^{\alpha}|1\sim 2,3\sim 4}
    \to  c_{\alpha,\lambda}^4.
\end{align*}
\end{lem}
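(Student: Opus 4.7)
The plan is to exploit the edge-independence of $G_{n,p}$: once we condition on specific edges being present, all remaining edges are still mutually independent $\Bin{1,p}$ variables. For the first identity, conditional on $1\sim 2$ I would write $d_i = 1 + Y_i$ with $Y_i = \sum_{v \in [n]\setminus\{1,2\}} \I{i\sim v}$. Because $Y_1$ and $Y_2$ are determined by disjoint sets of edges, they are independent $\Bin{n-2,\lambda/n}$ variables, so
\begin{align*}
\E{(d_1 d_2)^\alpha \mid 1 \sim 2} = \E{(1+Y_1)^\alpha}^2.
\end{align*}
Since $Y_1 \todist X \sim \Poi{\lambda}$ by the classical Poisson limit theorem and $y \mapsto (1+y)^\alpha$ is continuous, the convergence $\E{(1+Y_1)^\alpha} \to c_{\alpha,\lambda}$ follows from a uniform-integrability argument (see the last paragraph), giving the first claim.

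\textbf{Second identity.} I would repeat the decomposition with four vertices. Writing $B_{ij} = \I{i\sim j}$ for the four potential cross-edges $(i,j) \in \{1,2\}\times\{3,4\}$ and $Y_i = \sum_{v > 4}\I{i\sim v}$ for $i\in\{1,2,3,4\}$, the degrees split as
\begin{align*}
d_1 &= 1+B_{13}+B_{14}+Y_1, & d_2 &= 1+B_{23}+B_{24}+Y_2,\\
d_3 &= 1+B_{13}+B_{23}+Y_3, & d_4 &= 1+B_{14}+B_{24}+Y_4.
\end{align*}
Conditional on $\{1\sim 2,\,3\sim 4\}$ the eight variables $\{B_{ij}\}$ and $\{Y_i\}$ depend on pairwise disjoint edges, so the $Y_i$ are jointly independent $\Bin{n-4,\lambda/n}$ and independent of all $B_{ij}$. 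On the event $A = \{B_{13}=B_{14}=B_{23}=B_{24}=0\}$ the degrees reduce to $d_i = 1+Y_i$, so
\begin{align*}
\E{(d_1 d_2 d_3 d_4)^\alpha \,\I{A} \mid 1\sim 2,\, 3\sim 4} = (1-p)^4\,\E{(1+Y_1)^\alpha}^4 \longrightarrow c_{\alpha,\lambda}^4,
\end{align*}
by independence of $A$ from the $Y_i$'s and the limit from the first part. It then remains to show that the contribution from $A^c$ is negligible.

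\textbf{Main obstacle.} The technical crux is upgrading distributional convergence to convergence of expectations and bounding the $A^c$ remainder. For $\alpha \le 0$ everything is painless: $(1+Y_1)^\alpha \le 1$ and $(d_1\cdots d_4)^\alpha \le 1$ because each $d_i \ge 1$, so bounded convergence and $\p{A^c \mid 1\sim 2,\,3\sim 4} = 1-(1-p)^4 = O(1/n)$ suffice. For $\alpha > 0$ I would use the elementary pointwise bound $1+b_1+b_2+y \le 3(1+y)$ valid for $b_1,b_2 \in \{0,1\}$ and $y \ge 0$ to obtain $\prod_i d_i^{2\alpha} \le 3^{8\alpha}\prod_i (1+Y_i)^{2\alpha}$; taking expectations and using independence of the $Y_i$'s together with the fact that Binomial moments converge to the corresponding Poisson moments yields $\sup_n \E{(d_1\cdots d_4)^{2\alpha}} < \infty$. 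This uniform bound on second moments gives uniform integrability of $(d_1\cdots d_4)^\alpha$, hence convergence of the main term, and a Cauchy--Schwarz estimate $\E{(d_1\cdots d_4)^\alpha \I{A^c}} \le \E{(d_1\cdots d_4)^{2\alpha}}^{1/2}\p{A^c}^{1/2} = O(n^{-1/2})$ delivers the required $o(1)$ bound on the remainder. The analogous single-vertex version of the same argument handles the first identity.
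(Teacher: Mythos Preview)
Your proposal is correct and follows essentially the same route as the paper: the same conditional degree decomposition $d_i = 1 + (\text{cross-edge indicators}) + \Bin{n-k,\lambda/n}$, the same split on the event that no cross-edges appear between $\{1,2\}$ and $\{3,4\}$, and the same Poisson-limit-plus-uniform-integrability argument for the main term. The only differences are technical: the paper obtains uniform integrability for all $\alpha$ via the single series bound $\binom{n-2}{k}(\lambda/n)^k \le \lambda^k/k!$, whereas you split on the sign of $\alpha$ and invoke Binomial-to-Poisson moment convergence for $\alpha>0$; and your Cauchy--Schwarz treatment of the $A^c$ remainder is more explicit than the paper's one-line ``the second summand goes to $0$''.
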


\begin{proof}
 Let $X^{(n)}_i$, $i=1,2$ be independent with $\Bin{n-2,\lambda/n}$ distribution. Conditional on $\{1\sim 2\}$, $d_1\eqdist 1+X^{(n)}_1$ and $d_2\eqdist 1+X^{(n)}_2$, thus we have
\begin{align*}
    \E{(d_1d_2)^{\alpha}|1\sim 2}
    =\E{\left(1+X^{(n)}_1\right)^{\alpha}}\E{\left(1+X^{(n)}_2\right)^{\alpha}}=\E{\left(1+X^{(n)}_1\right)^{\alpha}}^2.
\end{align*}
Since $X_1^{(n)}$ converges in distribution to $X$, a $\Poi{\lambda}$ r.v. and $f(x)=(1+x)^{\alpha}$ is  continuous on $[0,\infty)$, then,  $(1+X_1^{(n)})^{\alpha}$ converges in distribution to $(1+X)^{\alpha}$. At this point, let us check that $\{(1+X_1^{(n)})^{\alpha}\}_{n}$ is uniformly integrable. To see this, we have that 

$$\E{\left(1+X_1^{(n)}\right)^{\alpha}}=\displaystyle\sum_{k=0}^{n-2} (1+k)^{\alpha} \binom{n-2}{k} \left(\frac{\lambda}{n}\right)^k \left(1- \frac{\lambda}{n}\right)^{n-2-k}\leq\displaystyle\sum_{k=0}^{n-2} (1+k)^{\alpha} \frac{\lambda^k}{k!}. $$
Considering the convergence of the series in the r.h.s. by d'Alembert's criterion, we deduce that 

$$\sup_{n\in \mathbb{N}} \E{\left(1+X_1^{(n)}\right)^{\alpha}} <\infty.$$
This observation leads to the conclusion that $\{(1+X_1^{(n)})^{\alpha}\}_{n}$ is uniformly integrable, therefore, as $n\to \infty$
\begin{align}\label{k}
    \E{(d_1d_2)^{\alpha}|1\sim 2}
    \to \E{(1+X)^{\alpha}}^2= c_{\alpha,\lambda}^2.
\end{align}
For the second statement, let $Y_i^{(n)}$, $1\le i\le 4$ be independent with $\Bin{n-4,\lambda/n}$ distribution. Similarly, conditional on  $\{1\sim 2, 3\sim 4\}$, $d_i\eqdist 1+\delta_i+Y_i^{(n)}$ where $\{\delta_i\}_{i\in [4]}$ are correlated random variables: $\delta_{i}=\I{i\sim 3}+\I{i\sim 4}$ and $\delta_{i}=\I{i\sim 1}+\I{i\sim 2}$, for $i=1,2$ and $i=3,4$, respectively. 
So, if let  $A$ denote the event that there is at least one edge connecting the vertices in $\{1,2\}$ and $\{3,4\}$; since $A^c$ tells us that $\delta_i=0$ for all $1\le i\le 4$, it follows that

\begin{align*}
& \E{(d_1d_2d_3d_4)^{\alpha}|1\sim 2,3\sim 4} \\
&= \E{(d_1d_2d_3d_4)^{\alpha}|1\sim 2,3\sim 4,A^c}\p{A^c}+\E{(d_1d_2d_3d_4)^{\alpha}|1\sim 2,3\sim 4,A}\p{A}, \\
&= \E{\left(1+Y_1^{(n)}\right)^{\alpha}}^4(1-\frac{\lambda}{n})^4+\E{(d_1d_2d_3d_4)^{\alpha}|1\sim 2,3\sim 4,A}(1-(1-\frac{\lambda}{n})^4),
\end{align*}
    which completes the proof since the first summand converges using a similar argument as in (\ref{k}), while the second summand goes to 0.
\end{proof}

\begin{proof}[Proof of Theorem \ref{thm:sparse}]
    In order to use the second moment method, it suffices to verify that, as $n\to \infty$,
\begin{align*}
  \frac{2\E{R_\alpha(G_{n,p})}}{n\lambda  c_{\alpha,\lambda}^2}\to 1 \qquad \text{and}\qquad \frac{\E{R^2_\alpha(G_{n,p})}}{\E{R_\alpha(G_{n,p})}^2}\to 1.  
\end{align*}
The first limit follows from the exchangeability of the vertices in $G_{n,p}$ and Lemma~\ref{lem:degrees}, since  
\begin{align*}
    \E{R_\alpha(G_{n,p})}&=\binom{n}{2} \E{(d_1d_2)^{\alpha}\I{1\sim 2} }
    =\frac{\lambda(n-1)}{2} \E{(d_1d_2)^{\alpha}|1\sim 2}.
\end{align*}
For the second limit, consider  $\E{R^2_\alpha(G_{n,p})}$, which equals
\begin{align*}
6\binom{n}{4} \E{(d_1d_2d_3d_4)^{\alpha} \I{1\sim 2}\I{3\sim 4}}+6 \binom{n}{3} \E{(d_1^2d_2d_3)^{\alpha} \I{1\sim 2}\I{1\sim 3}}+\binom{n}{2} \E{(d_1d_2)^{2\alpha} \I{1\sim 2}}.
\end{align*}
While 
\begin{align*}
6 \binom{n}{3} \E{(d_1^2d_2d_3)^{\alpha} \I{1\sim 2}\I{1\sim 3}}+\binom{n}{2} \E{(d_1d_2)^{2\alpha} \I{1\sim 2}}=O(n),
\end{align*}
the leading term, by Lemma~\ref{lem:degrees}, satisfies
\begin{align*}
6\binom{n}{4} n^{-2} \E{(d_1d_2d_3d_4)^{\alpha} \I{1\sim 2}\I{3\sim 4}}= (1+O(n^{-1})) \frac{\lambda^2}{4} \E{(d_1d_2d_3d_4)^{\alpha} |\I{1\sim 2}\I{3\sim 4}} \to \frac{\lambda^2 c_{\alpha,\lambda}^2}{4};
\end{align*}
which completes the proof of the second limit. 

For completeness: Since $\frac{\E{R_\alpha(G_{n,p})}}{n}\to \frac{\lambda  c_{\alpha,\lambda}^2}{2}$ as $n\to \infty$, it suffices to prove using Chebyshev's inequality, that for $\varepsilon>0$,
\begin{align*}
    \p{|R_\alpha(G_{n,p})-\E{R_\alpha(G_{n,p})}|\ge \varepsilon\E{R_\alpha(G_{n,p})}}
    \le \varepsilon^{-2} \frac{\mathrm{Var}(R_\alpha(G_{n,p}))}{\E{R_\alpha(G_{n,p})}^2}= \varepsilon^{-2}\left(\frac{\E{R^2_\alpha(G_{n,p})}}{\E{R_\alpha(G_{n,p})}^2}-1 
    \right)
    \to 0;
\end{align*}
as $n\to \infty$.

To confirm the last assertion for $\alpha \geq 0$, observe that $\lambda\E{(1+X)^{\alpha}}^2 \geq \lambda$. For $\alpha <0$, coincidentally, using the same approach that will be detailed in Lemma \ref{lem:densedegree}, but employing the Chernoff Poisson bounds, we can verify that when $\lambda \to \infty$, $\lambda^{-\alpha} c_{\alpha,\lambda} \to 1$ (see, e.g., Section 2.2 in \cite{BLM}). Consequently, the last statement is true.
\end{proof}

\subsection{Dense graphs}
The following analogous theorem holds for the case $p\in(0,1)$ fixed. 

\begin{thm}\label{thm:dense}
    For $p\in(0,1)$, the generalized Randić index of $G_{n,p}$ with parameter $\alpha\in \mathbb{R}$ is asymptotically concentrated around its mean. More precisely, as $n\to \infty$,
    \begin{align*}
        \frac{2R_\alpha(G_{n,p})}{n^{2(1+\alpha)}}\sim\frac{R_\alpha(G_{n,p})}{R_\alpha(K_{n})}\to {p^{1+2\alpha}},
    \end{align*}
    in probability.
\end{thm}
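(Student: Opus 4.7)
The plan is to mirror the sparse-case argument very closely. I would again apply the second moment method, which here reduces to showing two things: first, that $\E{R_\alpha(G_{n,p})}/R_\alpha(K_n)\to p^{1+2\alpha}$, and second, that $\E{R_\alpha(G_{n,p})^2}/\E{R_\alpha(G_{n,p})}^2 \to 1$. Both reductions go through the analog of Lemma~\ref{lem:degrees}, namely the asymptotics
\begin{align*}
\frac{\E{(d_1d_2)^\alpha \mid 1\sim 2}}{((n-1)p)^{2\alpha}}\to 1, \qquad \frac{\E{(d_1d_2d_3d_4)^\alpha \mid 1\sim 2,\,3\sim 4}}{((n-1)p)^{4\alpha}}\to 1,
\end{align*}
which is exactly the content of the Lemma~\ref{lem:densedegree} already alluded to in the paper.

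To prove these asymptotics I would exploit that, conditional on $\{1\sim 2\}$, the degrees $d_1$ and $d_2$ are independent copies of $1+\Bin{n-2,p}$, so by the law of large numbers $d_i/((n-1)p)\to 1$ in probability. By the continuous mapping theorem $(d_i/((n-1)p))^\alpha\to 1$ in probability, and convergence of expectations follows once uniform integrability is established. For $\alpha\ge 0$, this is immediate since $d_i\le n-1$ gives the deterministic bound $(d_i/((n-1)p))^\alpha\le p^{-\alpha}$. For the four-vertex statement, the independence (conditional on $A^c$) and a small-probability correction (conditional on $A$) can be handled exactly as in the sparse proof.

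The main obstacle is the uniform integrability for $\alpha<0$, where the atypical event of an abnormally small degree could in principle dominate the expectation. The remedy is the Chernoff bound for the binomial: for any $\delta\in(0,1)$,
\begin{align*}
\p{d_i \le (1-\delta)(n-1)p \mid 1\sim 2} \le e^{-c(\delta)\,(n-1)p},
\end{align*}
which is exponentially small in $n$. Splitting the expectation according to whether $d_i\ge (1-\delta)(n-1)p$ or not, the good event contributes $(1+o(1))$ to the normalized expectation, while the bad event contributes at most $((n-1)p)^{-\alpha}\cdot e^{-c(\delta)(n-1)p} = o(1)$. This is precisely the step the author signals by the reference to Chernoff Poisson bounds and Section~2.2 of \cite{BLM}.

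Given the lemma, the second moment calculation is routine and parallels the sparse case. Expanding $\E{R_\alpha(G_{n,p})^2}$ as a sum over ordered pairs of edges yields three contributions: disjoint edges, edges sharing a vertex, and repeated edges. The disjoint-edge term equals $6\binom{n}{4}p^2\,\E{(d_1d_2d_3d_4)^\alpha\mid 1\sim 2, 3\sim 4}\sim \tfrac{n^4}{4}p^2\cdot((n-1)p)^{4\alpha}$, which matches $\E{R_\alpha(G_{n,p})}^2$ to leading order. The remaining two terms are $O(n^3\,(np)^{4\alpha}\,p)$ after applying the same binomial concentration to bound the degree moments, hence negligible compared to the squared mean of order $n^4 (np)^{4\alpha} p^2$. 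Chebyshev's inequality then delivers the convergence in probability, exactly as in the sparse proof.
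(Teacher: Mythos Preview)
Your plan is essentially the paper's proof: second moment method, with the key input being Lemma~\ref{lem:densedegree} proved via the law of large numbers plus uniform integrability from Chernoff bounds. That part is fine.

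There is one concrete error. For the four-vertex statement you write that ``a small-probability correction (conditional on $A$) can be handled exactly as in the sparse proof.'' In the sparse proof that works because $\p{A}=1-(1-\lambda/n)^4\to 0$. In the dense regime $p$ is a fixed constant, so $\p{A}=1-(1-p)^4$ does \emph{not} vanish; copying the sparse decomposition verbatim yields only $p^{4\alpha}(1-p)^4$ from the $A^c$-term and leaves the $A$-term unaccounted for. The paper avoids this by a sandwiching argument: since conditionally $d_i\in[1+Y_i^{(n)},\,3+Y_i^{(n)}]$ with $Y_i^{(n)}\eqdist\Bin{n-4,p}$, one has
\[
\E{(c_1+Y^{(n)})^{\alpha}}^4 \;\le\; \E{(d_1d_2d_3d_4)^{\alpha}\mid 1\sim 2,\,3\sim 4} \;\le\; \E{(c_2+Y^{(n)})^{\alpha}}^4,
\]
with $\{c_1,c_2\}=\{1,3\}$ depending on the sign of $\alpha$, and both bounds converge to $p^{4\alpha}$ after normalization by $n^{4\alpha}$. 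Your route is salvageable---the bounded offsets $\delta_i\le 2$ do not affect the LLN limit, so the conditional expectation given $A$ also tends to $p^{4\alpha}$---but that requires an argument, not a dismissal as ``small probability.''
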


The above theorem relies on the following lemma.
\begin{lem}\label{lem:densedegree}
Let $\alpha\in \mathbb{R}$, $p\in (0,1)$ and consider $G_{n,p}$. As $n\to \infty$,    
\begin{align*}
    n^{-2\alpha}\E{(d_1d_2)^{\alpha}|1\sim 2}
    \to p^{2\alpha} \qquad \text{and} \qquad
    n^{-4\alpha}\E{(d_1d_2d_3d_4)^{\alpha}|1\sim 2,3\sim 4}
    \to p^{4\alpha}.
\end{align*}
\end{lem}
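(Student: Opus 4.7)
The plan is to mirror the proof of Lemma~\ref{lem:degrees}, using the weak law of large numbers in place of the Poisson limit. For the first claim, conditioning on $\{1\sim 2\}$ and relabeling gives $d_i\eqdist 1+X_i^{(n)}$ with $X_1^{(n)}, X_2^{(n)}$ independent $\Bin{n-2,p}$, so
\begin{align*}
n^{-2\alpha}\E{(d_1d_2)^{\alpha}|1\sim 2}=\left(\E{\left(\tfrac{1+X_1^{(n)}}{n}\right)^{\alpha}}\right)^2.
\end{align*}
Since $(1+X_1^{(n)})/n\to p$ in probability by the weak law of large numbers and $x\mapsto x^{\alpha}$ is continuous on $(0,\infty)$, the task reduces to upgrading this convergence to convergence of expectations via uniform integrability.

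For $\alpha\ge 0$ uniform integrability is immediate because $(1+X_1^{(n)})/n\le 1$, and dominated convergence concludes. The main obstacle is the case $\alpha<0$, where $((1+X_1^{(n)})/n)^{\alpha}$ can be as large as $n^{\abs{\alpha}}$ when $X_1^{(n)}$ is near $0$. I would apply a standard Chernoff lower tail bound for the binomial to find $c>0$ with $\p{X_1^{(n)}\le pn/2}\le e^{-cn}$; on the complementary event the integrand is bounded by $(p/2)^{\alpha}$, while on the rare event it is at most $n^{\abs{\alpha}}$ and contributes $n^{\abs{\alpha}}e^{-cn}\to 0$. This yields uniform integrability and thereby the first limit.

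For the second statement I would repeat the analysis on the conditional distribution given $\{1\sim 2,3\sim 4\}$, writing $d_i\eqdist 1+\delta_i+Y_i^{(n)}$ with $Y_i^{(n)}\sim \Bin{n-4,p}$ independent and $\delta_i\in\{0,1,2\}$ bounded corrections arising from possible cross-edges between $\{1,2\}$ and $\{3,4\}$. Since these corrections are $O(1)$ while $Y_i^{(n)}$ is of order $pn$, each $d_i/n\to p$ in probability and hence $\prod_{i=1}^{4}(d_i/n)^{\alpha}\to p^{4\alpha}$ in probability. Uniform integrability of the four-fold product—obtained via a union bound in the Chernoff step—then promotes this to convergence in expectation, delivering $n^{-4\alpha}\E{(d_1d_2d_3d_4)^{\alpha}|1\sim 2,3\sim 4}\to p^{4\alpha}$. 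The correlations among the $\delta_i$'s are harmless precisely because they are bounded and therefore invisible after the $n^{-4\alpha}$ rescaling.
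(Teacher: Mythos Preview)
Your argument is correct and follows the paper's approach closely: both proofs reduce the first limit to showing $n^{-\alpha}\E{(1+X^{(n)})^\alpha}\to p^\alpha$ via WLLN, continuous mapping, and uniform integrability established through a Chernoff tail bound when $\alpha<0$.

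The only noteworthy difference is in the second limit. The paper does not work with the four-fold product directly; instead it uses the monotonicity of $c\mapsto (c+y)^\alpha$ to sandwich
\[
\E{(c_1+Y^{(n)})^{\alpha}}^4 \le \E{(d_1d_2d_3d_4)^{\alpha}\mid 1\sim 2,3\sim 4}\le \E{(c_2+Y^{(n)})^{\alpha}}^4,
\]
with $\{c_1,c_2\}=\{1,3\}$ depending on the sign of $\alpha$, and then invokes the single-variable result already proved. This avoids any discussion of the correlations among the $\delta_i$'s or uniform integrability of the product. Your route---convergence in probability of the product plus a union-bound Chernoff argument for uniform integrability---is equally valid and arguably more direct; the paper's squeeze is a bit slicker in that it recycles the one-variable analysis verbatim.
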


\begin{proof}
Let $X^{(n)}\eqdist \Bin{n-2,p}$ and $Y^{(n)}\eqdist \Bin{n-4,p}$, it follows that 
  
  \begin{align*}
    \E{(d_1d_2)^{\alpha}|1\sim 2}
    =\E{\left(1+X^{(n)}\right)^{\alpha}}^2,
\end{align*}
and

  \begin{align*}
   \E{\left(c_1+Y^{(n)}\right)^{\alpha}}^4  \leq\E{(d_1d_2d_3d_4)^{\alpha}|1\sim 2,3\sim 4}
    \leq\E{\left(c_2+Y^{(n)}\right)^{\alpha}}^4,
\end{align*}
where $c_1=1,3$ and $c_2=3,1$, for $\alpha \geq  0$  and $\alpha < 0$, respectively. Thus, the proof boils down to establishing, for $X^{(n)}$ (and similarly, for $Y^{(n)}$) that 
    \begin{align}\label{eq:Xdense}
    n^{-\alpha} \E{\left(1+X^{(n)}\right)^{\alpha}}\to p^{\alpha}. 
    \end{align}
    According to the Weak Law of Large Numbers (WLLN), we have that $X^{(n)}/n$ converges to $p$ in probability. Hence, $(1+X^{(n)})/n$ converges to $p$ in probability and since  $f(x)=x^{\alpha}$ is continuous on $(0,\infty)$, it is verified that $n^{-\alpha}(1+X^{(n)})^{\alpha}$ converges to $p^{\alpha}$ in probability. At this point, similarly, we are going to verify that $\{((1+X_1^{(n)})/n)^{\alpha}\}_{n }$ is uniformly integrable. It is worth noting that for $\alpha \geq 0$, verifying the statement is straightforward as $(1+X^{(n)})/n \leq 1$, for all $n$. Let's check the case  $\alpha < 0$: To see this, let $\varepsilon \in (0,p)$ and define $A^+_{\varepsilon,n}=\{ X^{(n)}\ge (p+\varepsilon)(n-2)\}$ and $A^-_{\varepsilon,n}=\{ X^{(n)}\le (p-\varepsilon)(n-2)\}$. Using Chernoff bounds (see, e.g. \cite[Theorem 2.1]{JLR} and \cite[Corollary 2.2]{JLR}) and that $\alpha < 0$, 
\begin{align*}
    n^{-\alpha} \E{(1+X^{(n)})^{\alpha}\I{A^+_{\varepsilon,n}}}
    &\le n^{-\alpha}\p{A^+_{\varepsilon,n}} \le 
    n^{-\alpha}e^{-\frac{\varepsilon^2(n-2)}{3p}},\\
    n^{-\alpha} \E{(1+X^{(n)})^{\alpha}\I{A^-_{\varepsilon,n}}}
    &\le n^{-\alpha} \p{A^-_{\varepsilon,n}} \le n^{-\alpha}e^{-\frac{\varepsilon^2(n-2)}{2p}}.
\end{align*}
Thus, 

$$\displaystyle\lim_{n\to \infty}  n^{-\alpha} \E{(1+X^{(n)})^{\alpha}\I{A^+_{\varepsilon,n}}}=\displaystyle\lim_{n\to \infty}  n^{-\alpha} \E{(1+X^{(n)})^{\alpha}\I{A^-_{\varepsilon,n}}}=0.$$
On the other hand, 
$$n^{-\alpha} \E{(1+X^{(n)})^{\alpha}\I{(A^+_{\varepsilon,n}\cup A^-_{\varepsilon,n})^c}} \leq \left(\frac{1+(p-\varepsilon)(n-2)}{n}\right)^{\alpha}.$$
Therefore,
\begin{align*}\label{eq:neg}
   n^{-\alpha} \E{(1+X^{(n)})^{\alpha}}\leq (p-\epsilon)^{\alpha}+2\epsilon,
\end{align*}
for $n \geq N_{\epsilon}$. This tells us that the desired sequence is uniformly integrable, since $$\sup_{n\in \mathbb{N}} \E{\left(\frac{1+X^{(n)}}{n}\right)^{\alpha}} <\infty,$$ remember that $\varepsilon\in (0,p)$ is arbitrary.  Therefore, the proof is completed.
\end{proof}

\begin{proof}[Proof of Theorem~\ref{thm:dense}]
Analogously to the proof of Theorem~\ref{thm:sparse}, the exchangeability of the vertices in $G_{n,p}$ gives
\begin{align*}
    \E{R_\alpha(G_{n,p})}&=\binom{n}{2} p \E{(d_1d_2)^{\alpha}|1\sim 2},
\end{align*}
and $ \E{R^2_\alpha(G_{n,p})}$ equals to

   $$6\binom{n}{4} p^2\E{(d_1d_2d_3d_4)^{\alpha} |1\sim 2,3\sim 4}+ 6 \binom{n}{3}p^2 \E{(d_1^2d_2d_3)^{\alpha} |1\sim 2,1\sim 3}+\binom{n}{2}p \E{(d_1d_2)^{2\alpha}|1\sim 2}.$$
So that by Lemma~\ref{lem:densedegree}, as $n\to \infty$, 
\begin{align*}
    \frac{\E{R_\alpha(G_{n,p})}}{n^{2(1+\alpha)}} = (1+O(n^{-1}))\frac{p n^{-2\alpha}}{2}\E{(d_1d_2)^{\alpha}|1\sim 2} \to \frac{p^{1+2\alpha}}{2},
\end{align*}
and

\begin{align*}
    \frac{\E{R^2_\alpha(G_{n,p})}}{n^{4(1+\alpha)}} = (1+O(n^{-1}))\frac{p^2 n^{-4\alpha}}{4}\E{(d_1d_2d_3d_4)^{\alpha}|1\sim 2,3\sim 4} + O(n^{-1})\to \frac{p^{2+4\alpha}}{4},
\end{align*}
since, as in Lemma~\ref{lem:densedegree}

 \begin{align*}
    n^{-4\alpha} \E{(d_1^2d_2d_3)^{\alpha} |1\sim 2,1\sim 3} \to p^{4\alpha},
    \end{align*}
and
 \begin{align*}
    n^{-4\alpha}\E{(d_1d_2)^{2\alpha} |1\sim 2} \to p^{4\alpha},
    \end{align*}
Therefore, the proof is established using the second moment method, as in the proof of Theorem~\ref{thm:sparse}.
\end{proof}

\section{Discussion}

Limit theorems for the classic Randić connectivity index can be derived from Theorems \ref{thm:sparse} and \ref{thm:dense} by taking $\alpha=-1/2$. 
In the dense case, we observe that its asymptotic behavior is independent of the parameter $p$, i.e,
its determination comes only from the structure; whereas the limits of its variations do fluctuate: as a function of $p$, the long-term behavior of the generalized Randić index is increasing  for $-1/2 < \alpha $  and  decreasing for $\alpha < -1/2$. By the way, as a function of $\alpha$, the long-term behavior of the generalized Randić index is decreasing for $p \in (0,1)$. Throughout this section we define 
$L[\alpha,\lambda]= \lambda c^2_{\alpha,\lambda}.$ 

\begin{cor}\label{cd1}
    For both the dense and sparse case of ER graphs, the classic Randić index of $G_{n,p}$ is asymptotically concentrated around its mean. More precisely, as $n\to \infty$,

      \begin{align*}
        \frac{R_{-1/2}(G_{n,p})}{R_{-1/2}(K_n)}\to L,
    \end{align*}
    in probability, where 
    $L=1$ for the dense case $(p\in (0,1))$ and 
    $L=L[-1/2,\lambda]$ for sparse case $(p=\lambda/n, \lambda> 0)$. Moreover, $L[-1/2,\lambda]$ converges to 1 as $\lambda \to \infty$.
\end{cor}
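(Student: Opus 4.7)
The plan is to obtain Corollary~\ref{cd1} as a direct specialization of Theorems~\ref{thm:sparse} and~\ref{thm:dense} at $\alpha=-1/2$, combined with the last assertion in the proof of Theorem~\ref{thm:sparse}. No new random-graph estimates should be needed; this is entirely a matter of bookkeeping the exponents and invoking already-established convergences.

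For the sparse case, I would apply Theorem~\ref{thm:sparse} with $\alpha=-1/2$. Since the theorem states $R_\alpha(G_{n,p})/R_{-1/2}(K_n) \to \lambda c_{\alpha,\lambda}^2$ in probability, setting $\alpha=-1/2$ yields the convergence to $L[-1/2,\lambda]$ immediately. For the dense case, I would use that $R_{-1/2}(K_n)= n(n-1)^{1+2\alpha}/2 = n/2$ when $\alpha=-1/2$, so $R_{-1/2}(K_n) = n^{2(1+\alpha)}/2$ up to lower-order terms; hence Theorem~\ref{thm:dense} at $\alpha=-1/2$ gives
\begin{align*}
\frac{R_{-1/2}(G_{n,p})}{R_{-1/2}(K_n)} = \frac{2R_{-1/2}(G_{n,p})}{n} \to p^{1+2(-1/2)}=p^{0}=1,
\end{align*}
in probability, confirming that $L=1$ in the dense regime, independently of $p$.

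For the last assertion, that $L[-1/2,\lambda]\to 1$ as $\lambda\to\infty$, I would invoke the observation already used in the proof of Theorem~\ref{thm:sparse}: for $\alpha<0$, $\lambda^{-\alpha}c_{\alpha,\lambda}\to 1$ as $\lambda\to\infty$ (via the Chernoff-Poisson estimates referenced there). Taking $\alpha=-1/2$ gives $\sqrt{\lambda}\,c_{-1/2,\lambda}\to 1$, so that
\begin{align*}
L[-1/2,\lambda]=\lambda c_{-1/2,\lambda}^2 = \bigl(\sqrt{\lambda}\,c_{-1/2,\lambda}\bigr)^2 \to 1.
\end{align*}

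There is no substantive obstacle in this corollary; the only care required is the algebraic check that the normalizations $R_{-1/2}(K_n)=n/2$ (dense) and $R_{-1/2}(K_n)=n/2$ (sparse, same formula) are compatible with the two different scalings in the theorems. If anything, the step most prone to a slip is tracking that $n^{2(1+\alpha)}\big|_{\alpha=-1/2}=n$, which must be matched against $R_{-1/2}(K_n)$ so that the quotient in the corollary and the quotient in Theorem~\ref{thm:dense} coincide asymptotically.
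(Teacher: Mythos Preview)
Your proposal is correct and matches the paper's approach: the paper does not give a separate proof of this corollary but simply remarks that it follows from Theorems~\ref{thm:sparse} and~\ref{thm:dense} by taking $\alpha=-1/2$, together with the asymptotic $\lambda c_{\alpha,\lambda}^2\sim\lambda^{1+2\alpha}$ established at the end of the proof of Theorem~\ref{thm:sparse}. If anything, you have written out more of the bookkeeping than the paper does.
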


We can compare Corollary \ref{cd1} with the results of \cite{MRS}, where the authors conducted a computational analysis of $R_{-1/2}(G)$ in the dense Erd\H{o}s-Rényi model. Specifically, they plotted the average Randić index against the probability $p$ for various orders  of $n$ and noted a transition from zero to $n/2$, with the rate of change increasing proportionally to $n$. This observation aligns well with the previous finding in the dense scenario. Additionally, when they re-plotted the curves as a function of $np$, they noticed that curves representing different graph sizes $n$ overlapped, forming a single universal curve. This observation, and even the asymptotic behaviour of the curve, can be further understood by considering our sparse results, which indicate that when the product $np=\lambda$ is constant, the classic Randić index over $n/2$ goes to a fixed value, and that it goes to 1 when $\lambda$ goes to infinity. 

In fact, plotting $L[-1/2,\lambda]$ against \( \lambda \) using a logarithmic scale (refer to Figure \ref{f1}), we observe a curve consistent with the universal curve depicted in Figure 2(b) of \cite{MRS}. This alignment enables the validation of the methodology proposed for predicting the classic Randić index's value. Specifically, it delineates three distinct regimes: the predominantly isolated vertices regime (\( \lambda < 0.01 \)), the transitional regime (\( 0.01 < \lambda < 10 \)), and the regime characterized by nearly complete graphs (\( \lambda > 10 \)).

\begin{figure}[h!]

   \centering
    \includegraphics[width=0.6\textwidth]{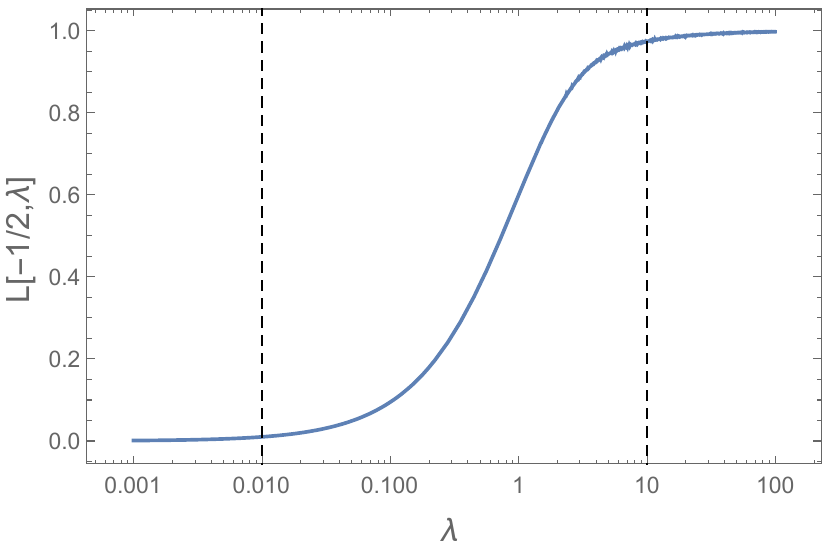}
    \caption{ $L[-1/2,\lambda]$  as a function of $\lambda$, where the expectation was computed by averaging over 2000 random variables $Poi(\lambda)$.} 
    \label{f1}
\end{figure}
When $\alpha=-1$, $R_{-1}$ is known as the second modified Zagreb index; which is related to the eigenvalues of the normalized Laplacian matrix of the graph \cite{C}. Thus, limit theorems concerning the second modified Zagreb index can be obtained from Theorems \ref{thm:sparse} and \ref{thm:dense} by setting $\alpha=-1$.
\begin{cor}\label{cd2}
    The modified second Zagreb index of $G_{n,p}$ is asymptotically concentrated around its mean. More precisely, as $n\to \infty$,

      \begin{align*}
        R_{-1}(G_{n,p})\to \frac{1}{2p},
    \end{align*}
    in probability, for the dense case.  Moreover, $L[-1,\lambda]$ converges to 0 as $\lambda \to \infty$.
\end{cor}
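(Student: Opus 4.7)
The statement has two parts, both of which should follow with little extra work from the machinery already in place.

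The first assertion is a direct specialization of \refT{thm:dense} to the case $\alpha=-1$. Indeed, with $\alpha=-1$ we have $2(1+\alpha)=0$ and $1+2\alpha=-1$, so the theorem yields $2R_{-1}(G_{n,p})\to 1/p$ in probability, which is exactly the claim. My plan is simply to state this substitution and record the simplified powers of $n$ and $p$.

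For the second assertion, my plan is to compute $c_{-1,\lambda}=\E{(1+X)^{-1}}$ with $X\eqdist\Poi{\lambda}$ in closed form. The key observation is the reindexing identity
\begin{align*}
\E{\frac{1}{1+X}}=\sum_{k=0}^\infty \frac{1}{k+1}\cdot\frac{\lambda^k e^{-\lambda}}{k!}=\frac{e^{-\lambda}}{\lambda}\sum_{k=0}^\infty \frac{\lambda^{k+1}}{(k+1)!}=\frac{1-e^{-\lambda}}{\lambda}.
\end{align*}
Substituting into $L[-1,\lambda]=\lambda c_{-1,\lambda}^2$ gives $L[-1,\lambda]=(1-e^{-\lambda})^2/\lambda$, which tends to $0$ as $\lambda\to\infty$. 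This also matches the informal statement in the paper that $\lambda c_{\alpha,\lambda}^2\sim\lambda^{1+2\alpha}=\lambda^{-1}$ for $\alpha=-1$.

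There is essentially no obstacle here: both parts are corollaries, and the only non-trivial step is the elementary evaluation of the negative-moment $\E{(1+X)^{-1}}$ of a Poisson variable. The only thing to be slightly careful about is stating clearly that applying \refT{thm:dense} is legitimate at the boundary value $\alpha=-1$, which is covered by the theorem since $\alpha\in\mathbb{R}$ is arbitrary.
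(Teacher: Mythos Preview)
Your proposal is correct. The first part is exactly the paper's intended route: specialize \refT{thm:dense} at $\alpha=-1$, whereupon $n^{2(1+\alpha)}=1$ and $p^{1+2\alpha}=p^{-1}$, yielding $R_{-1}(G_{n,p})\to 1/(2p)$ in probability.

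For the second part your approach differs slightly from the paper's. The paper does not compute $c_{-1,\lambda}$ explicitly; instead it relies on the general asymptotic stated at the end of \refT{thm:sparse}, namely that $\lambda c_{\alpha,\lambda}^2$ behaves like $\lambda^{1+2\alpha}$ as $\lambda\to\infty$ (justified for $\alpha<0$ via Chernoff--Poisson bounds as in the proof of \refL{lem:densedegree}). Specializing that to $\alpha=-1$ gives $L[-1,\lambda]\sim\lambda^{-1}\to 0$. Your explicit evaluation $c_{-1,\lambda}=(1-e^{-\lambda})/\lambda$, hence $L[-1,\lambda]=(1-e^{-\lambda})^2/\lambda$, is more elementary and in fact stronger: it gives the exact value of the limiting constant for every $\lambda$, not merely its order as $\lambda\to\infty$, and it dovetails with the closed-form expectation $\E{R_{-1}(G_{n,p})}=(1-(1-p)^{n+1})^2/(2p)$ quoted from \cite{DMRSV}. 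The paper's route, on the other hand, is uniform in $\alpha<0$ and does not rely on the lucky reindexing that only works at $\alpha=-1$.
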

The following expression was derived in \cite{DMRSV}, for the expected value of $R_{-1}$ in the dense case,

$$\mathbb{E}[R_{-1}(G_{n,p})]=\frac{\left(1-(1-p)^{n+1}\right)^2}{2p}.$$
This result aligns with the previously mentioned observation that $R_{-1}$ asymptotically converges to its expectation.

Inspired by the above analysis and in order to complement the main results, $L[\alpha,\lambda]$ and $c[\alpha,\lambda]$  are plotted as functions of $\lambda$ for some values of $\alpha$ in Figure \ref{f2.1} and Figure \ref{f2.2}, respectively. Please note that the plot of $L[-1/2,\lambda]$ has already been depicted in Figure \ref{f1}. Here, the average is computed over a sample of 2000 random variables $Poi(\lambda)$.

\begin{figure}[h!]
  \centering
  \begin{minipage}[t]{0.3\textwidth}
    \centering
    \includegraphics[width=\linewidth]{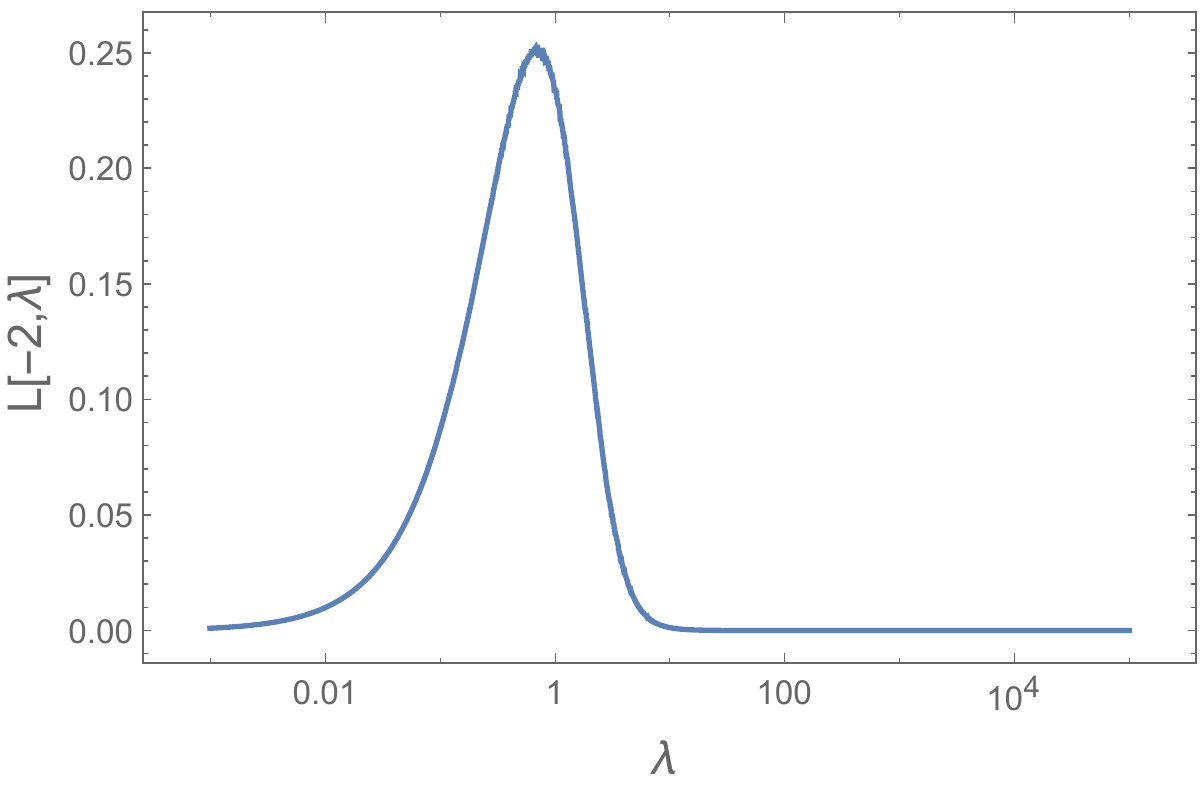}
    \caption*{$\alpha=-2$}
  \end{minipage}%
  \begin{minipage}[t]{0.3\textwidth}
    \centering
    \includegraphics[width=\linewidth]{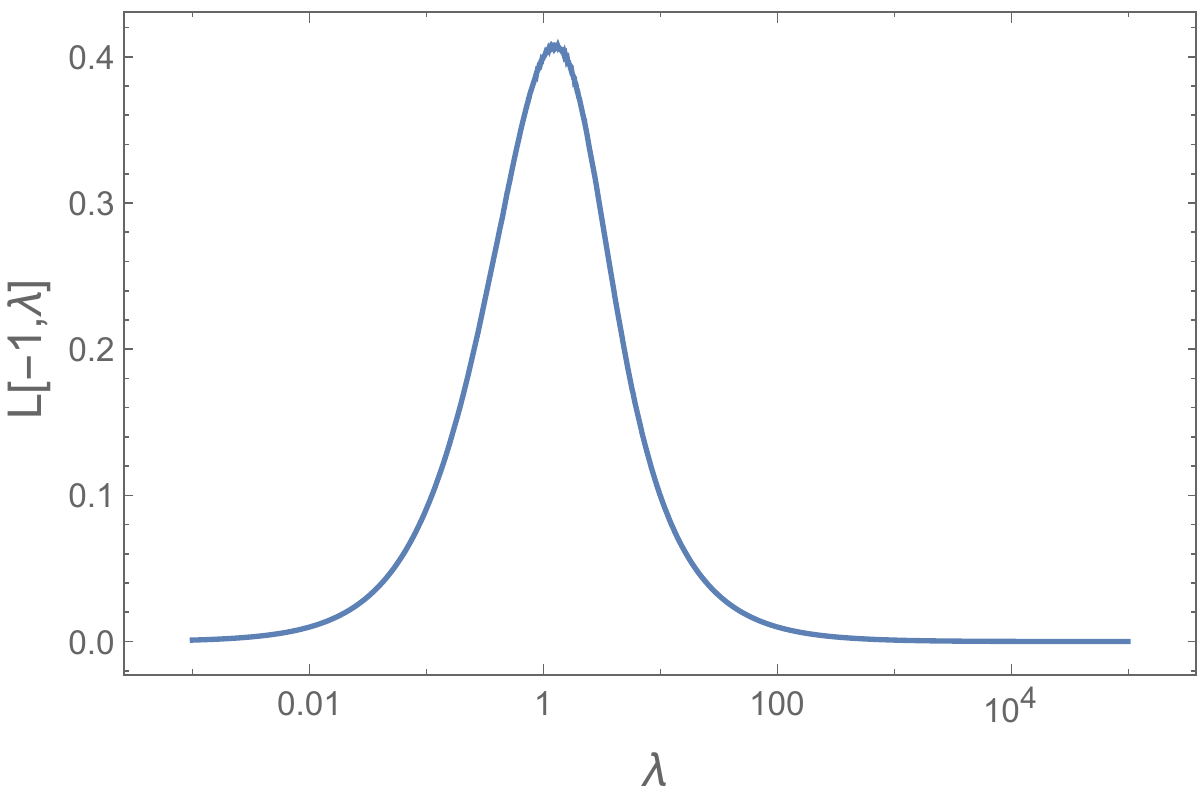}
    \caption*{$\alpha=-1$}
  \end{minipage}%
  \begin{minipage}[t]{0.3\textwidth}
    \centering
    \includegraphics[width=\linewidth]{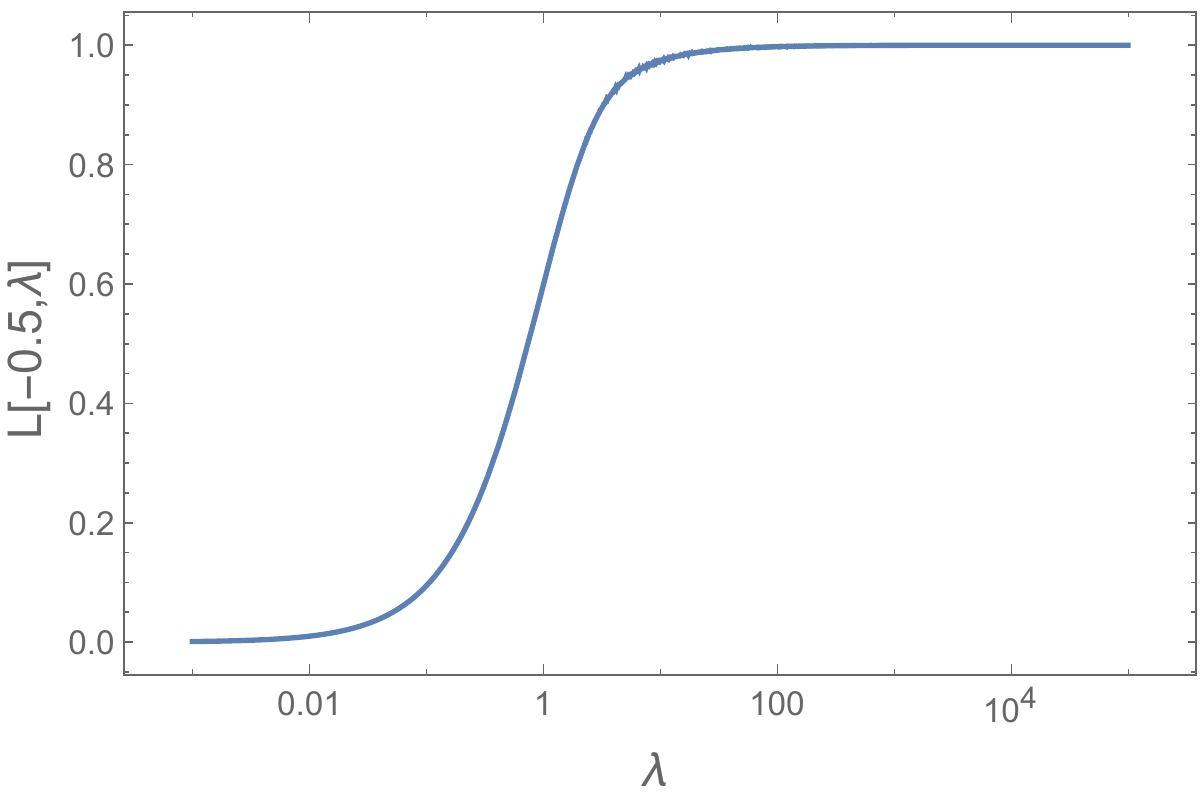}
    \caption*{$\alpha=-0.5$}
  \end{minipage}

  \vspace{0.5cm} 

  \begin{minipage}[t]{0.3\textwidth}
    \centering
    \includegraphics[width=\linewidth]{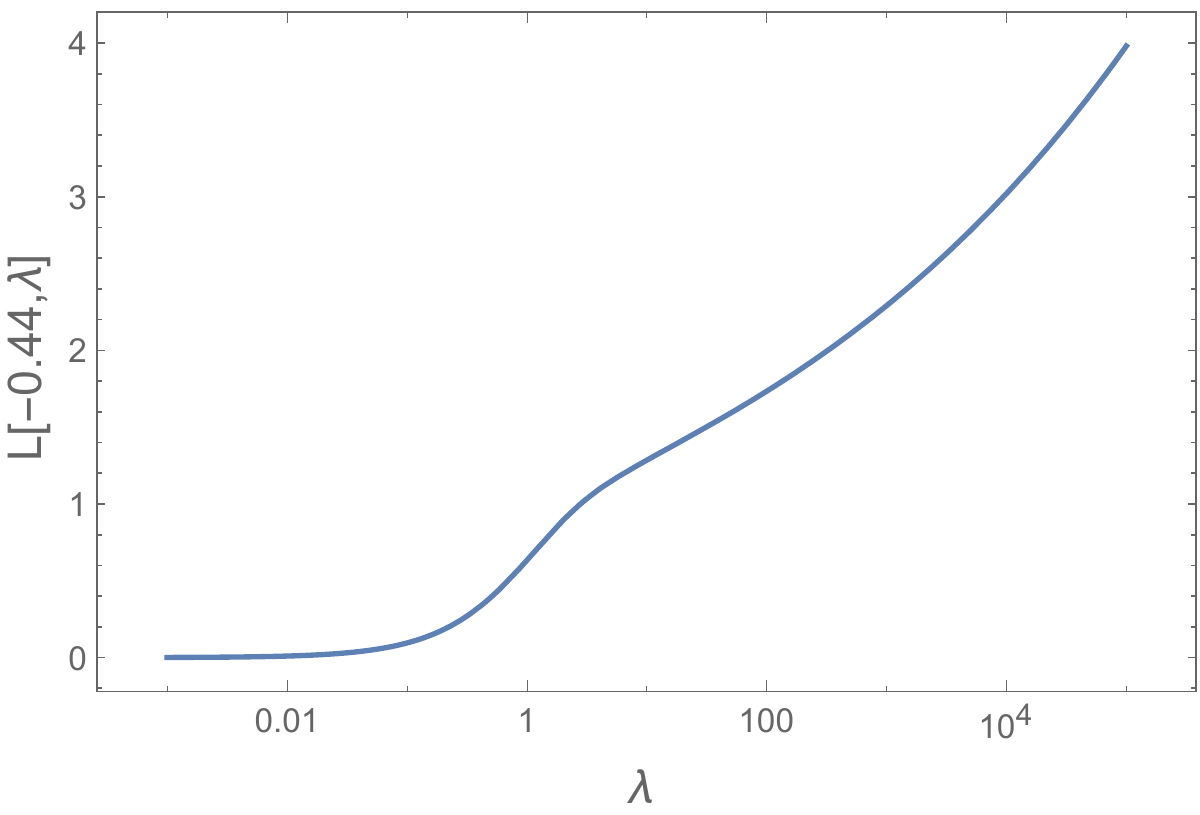}
    \caption*{$\alpha=-0.44$}
  \end{minipage}%
  \begin{minipage}[t]{0.3\textwidth}
    \centering
    \includegraphics[width=\linewidth]{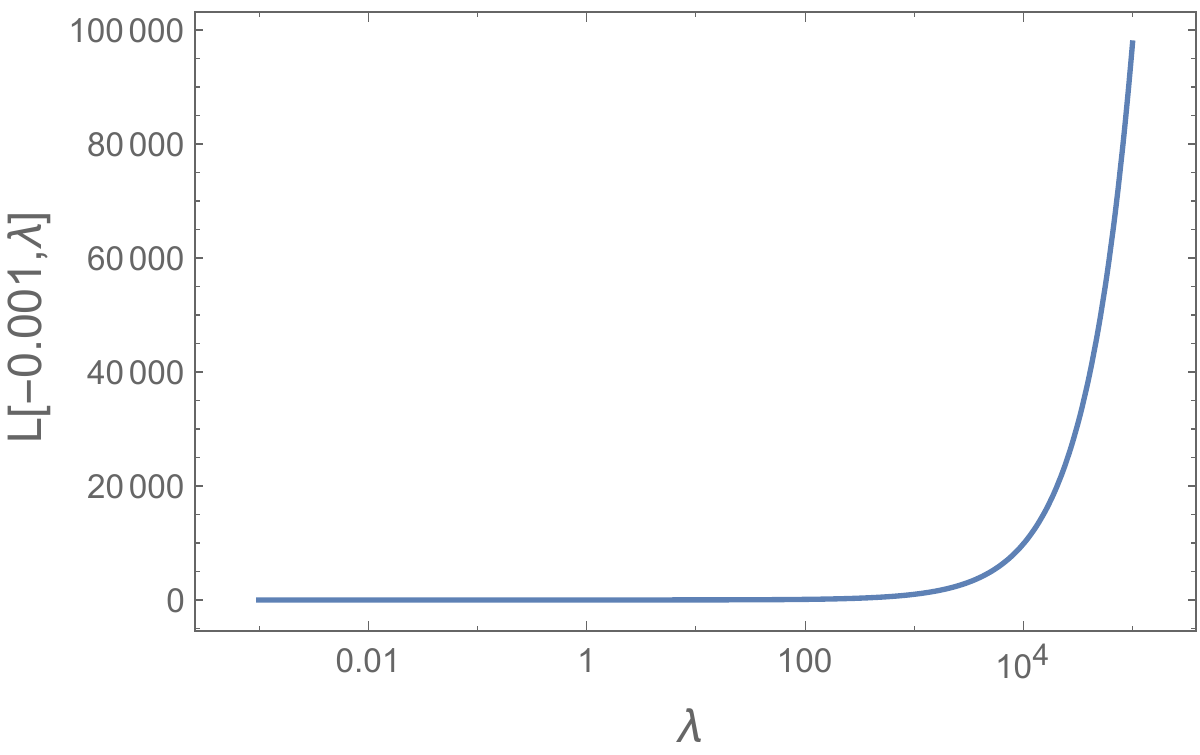}
    \caption*{$\alpha=-0.001$}
  \end{minipage}%
  \begin{minipage}[t]{0.3\textwidth}
    \centering
    \includegraphics[width=\linewidth]{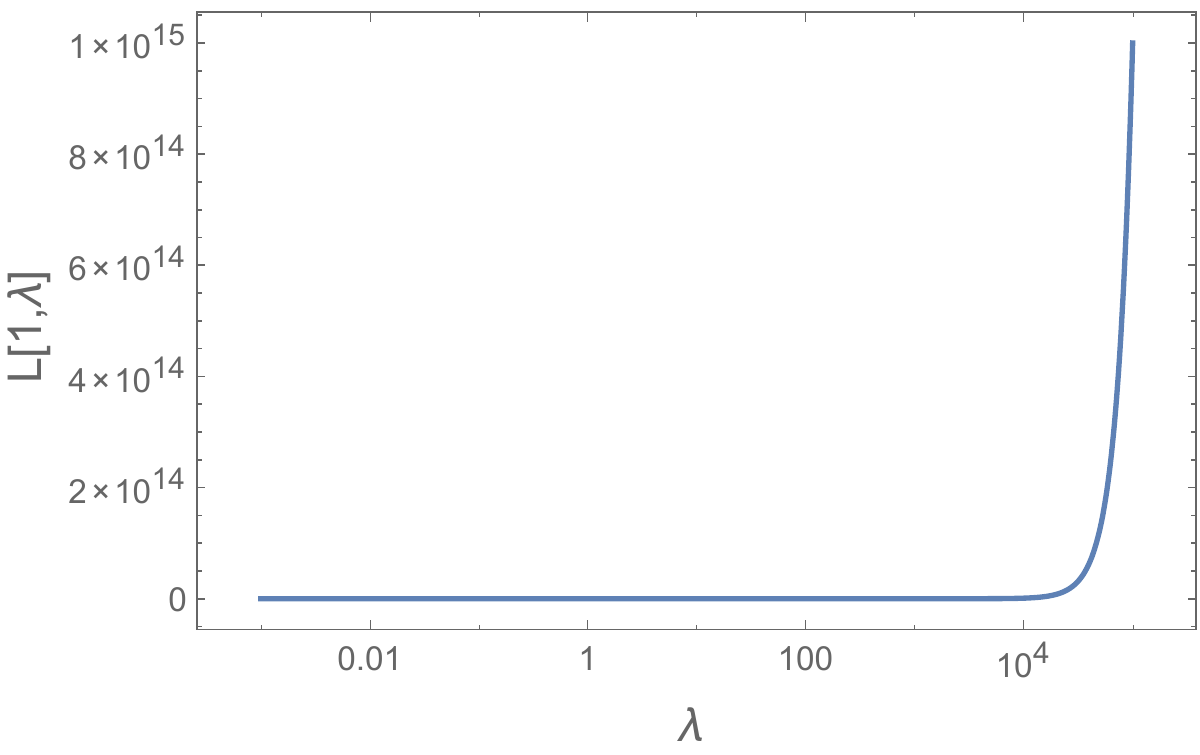}
    \caption*{$\alpha=1$}
  \end{minipage}%

  \caption{$L[\alpha,\lambda]$ as a function of $\lambda$, where the average was calculated by averaging over 2000 random variables $Poi(\lambda)$.}
  \label{f2.1}
\end{figure}

  \begin{figure}[h!]
  \centering
  \begin{minipage}[t]{0.3\textwidth}
    \centering
    \includegraphics[width=\linewidth]{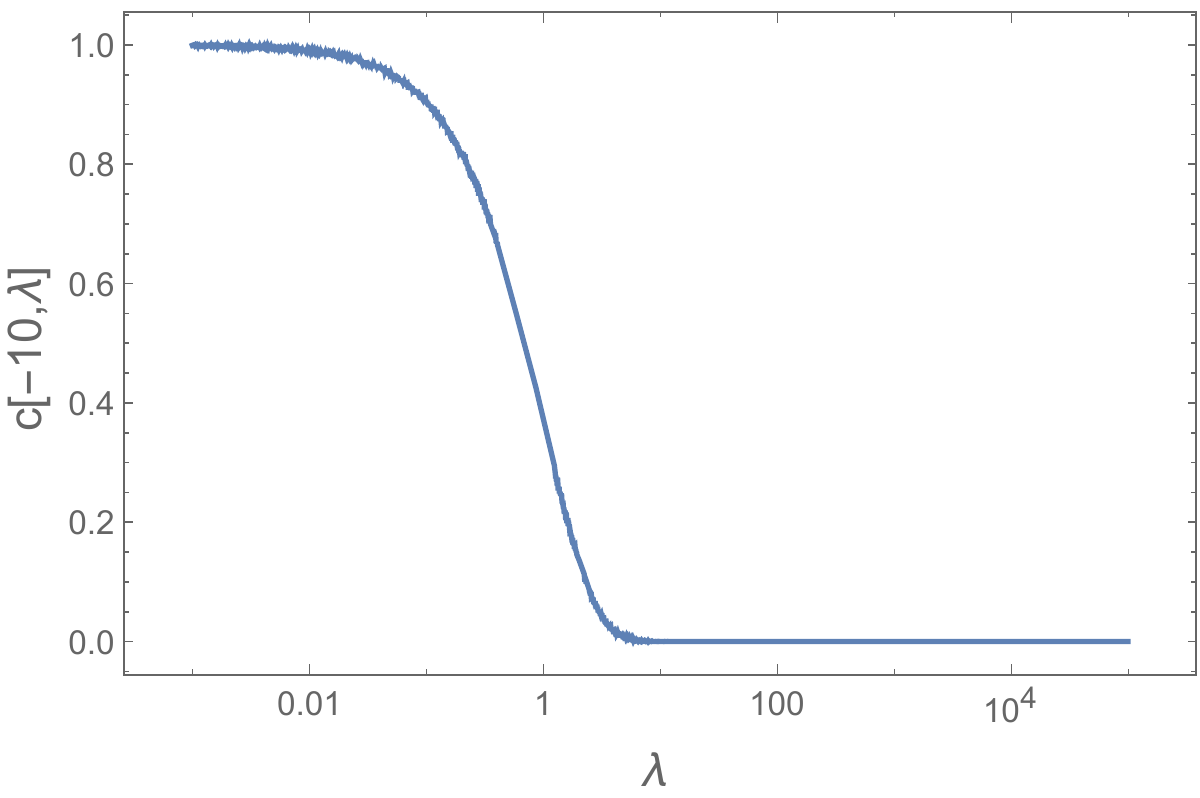}
    \caption*{$\alpha=-10$}
  \end{minipage}%
  \begin{minipage}[t]{0.3\textwidth}
    \centering
    \includegraphics[width=\linewidth]{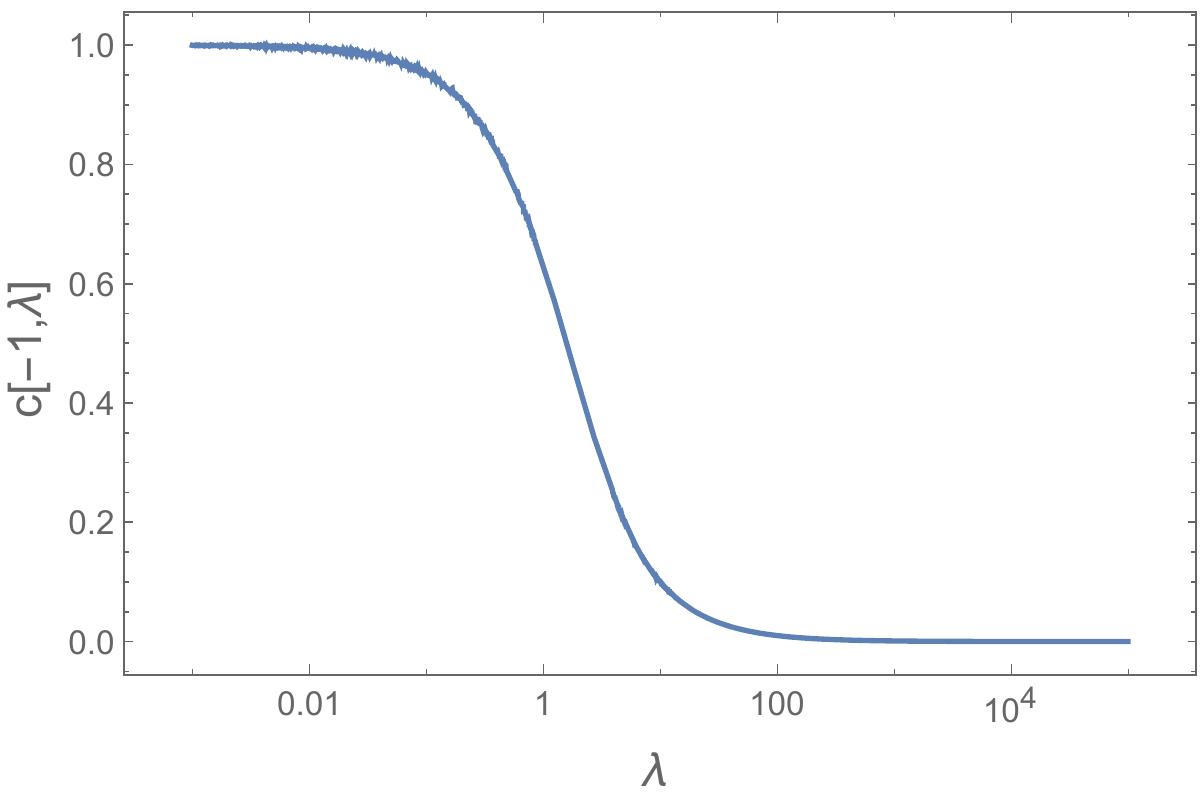}
    \caption*{$\alpha=-1$}
  \end{minipage}%
  \begin{minipage}[t]{0.3\textwidth}
    \centering
    \includegraphics[width=\linewidth]{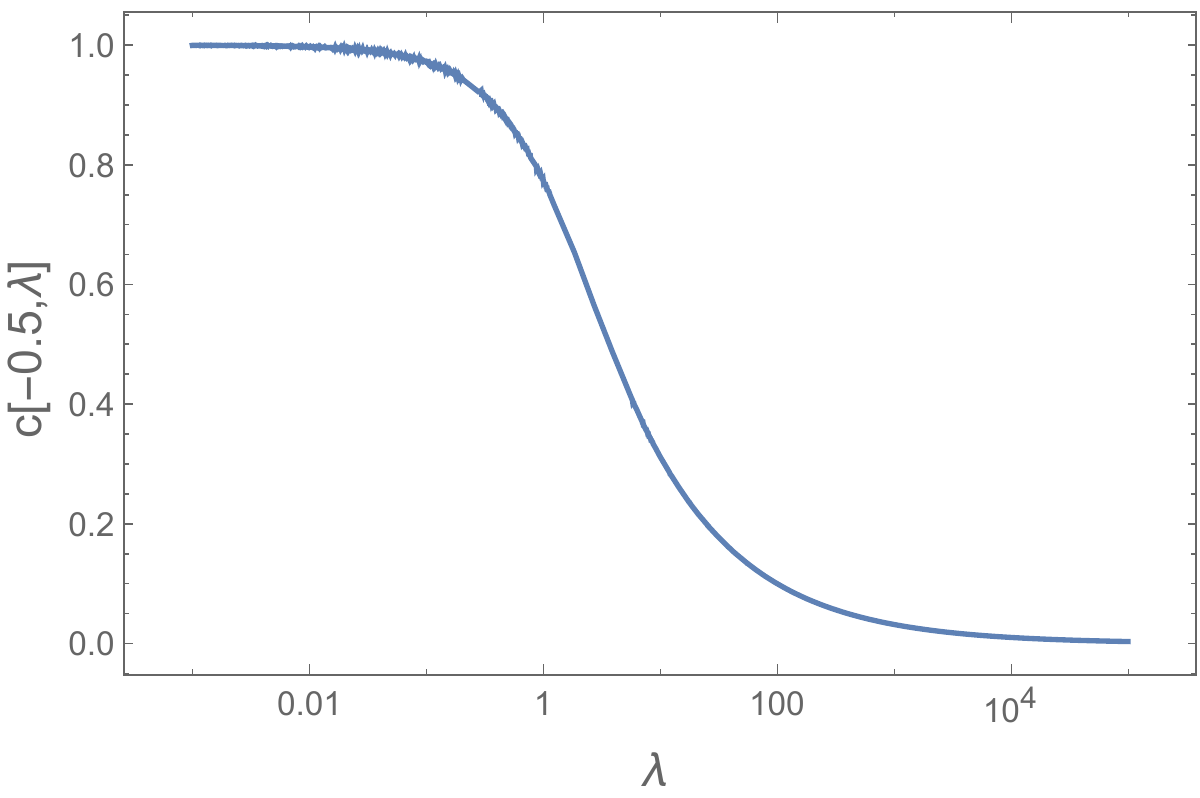}
    \caption*{$\alpha=-0.5$}
  \end{minipage}

  \vspace{0.5cm} 

  \begin{minipage}[t]{0.3\textwidth}
    \centering
    \includegraphics[width=\linewidth]{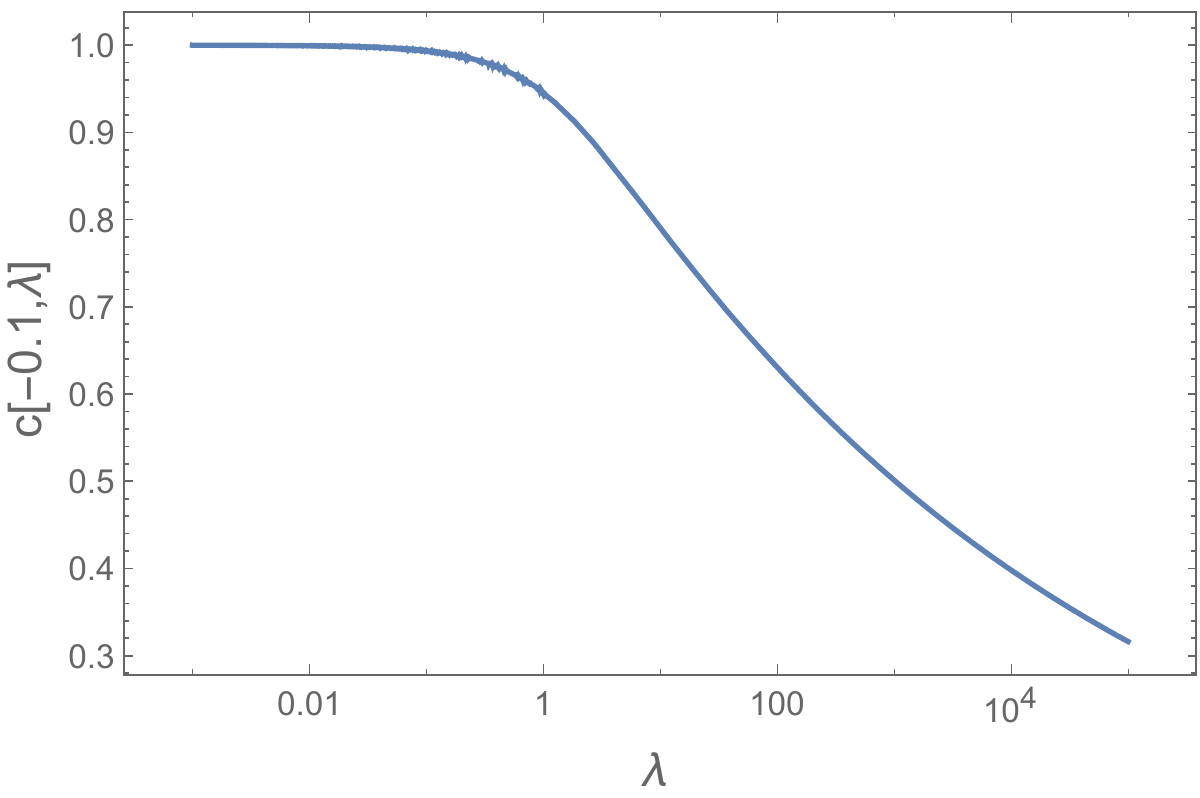}
    \caption*{$\alpha=-0.1$}
  \end{minipage}%
  \begin{minipage}[t]{0.3\textwidth}
    \centering
    \includegraphics[width=\linewidth]{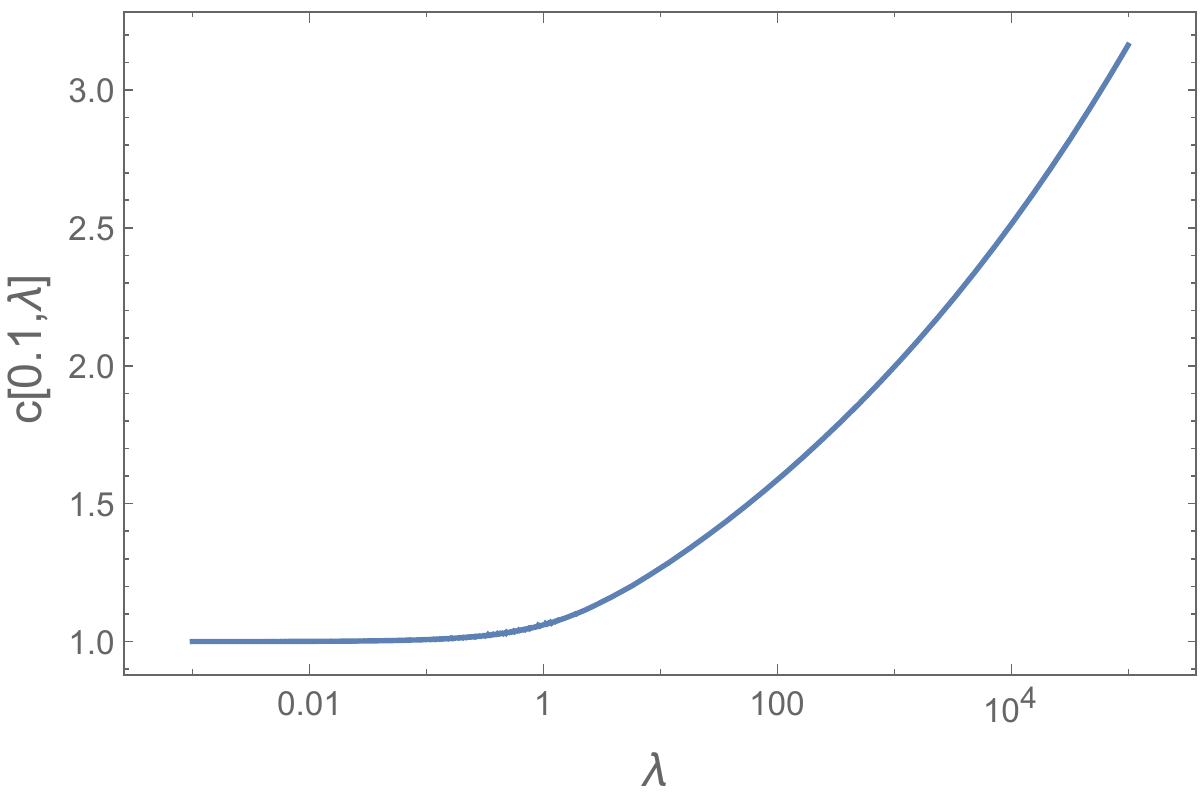}
    \caption*{$\alpha=0.1$}
  \end{minipage}%
  \begin{minipage}[t]{0.3\textwidth}
    \centering
    \includegraphics[width=\linewidth]{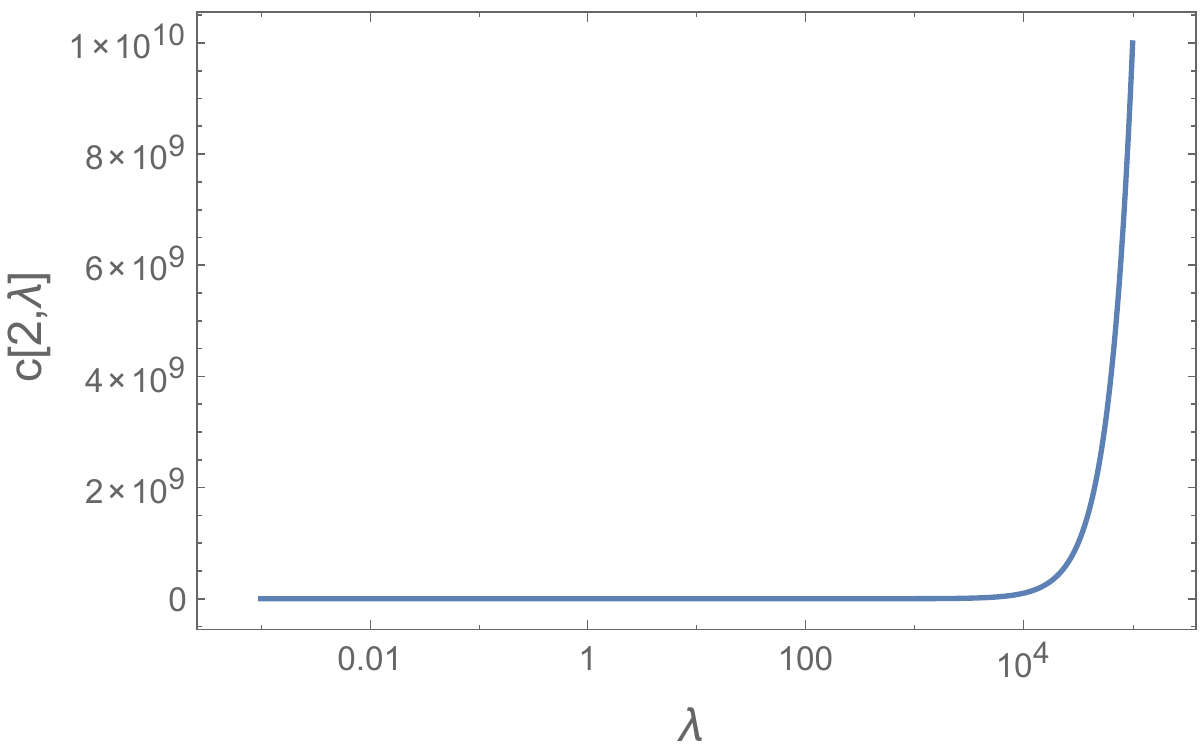}
    \caption*{$\alpha=2$}
  \end{minipage}%

   \caption{ $c[\alpha,\lambda]$ as a function of $\lambda$, where the expectation was computed by averaging over 2000 random variables $Poi(\lambda)$.}
    \label{f2.2}
  
\end{figure}

On the other hand, in a more recent work \cite{Y}, the author employed primarily Taylor expansion and the Mean Value Theorem to study  the generalized Randić index on an  inhomogeneous Erd\H{o}s–Rényi random graph (encompassing the classical Erd\H{o}s–Rényi random graphs). Specifically, in Corollary 2.2, the author obtained that: with $n p_n log(2) \geq log(n)$ (indicating an a.s connected graph), it follows that
$$R_\alpha(G_{n,p})=\frac{n^{2(1+\alpha)} p_n^{1+2 \alpha}}{2}\left[1+O_P\left(\frac{\left(\log \left(n p_n\right)\right)^{4(1-\alpha)_{+}}}{\sqrt{n p_n}}\right)\right].$$
 
In the context provided, $X_n = O_P(a_n)$ signifies that $X_n/a_n$ is bounded in probability.  As expected, adopting  an alternative and  more reduced approach, we obtain more accurate results which  are consistent with these findings. 

Finally, at this point, it is important to emphasize that the identical process (or a slightly different tehcnik) can be utilized for analyze other topological indices derived from degree: $$TI(G)= \displaystyle\sum_{v\sim u} F(d_{v},d_{u}),$$ such as:
\begin{itemize}
    
    \item The atom - bond connectivity ($ABC$), where $F(x, y)=\sqrt{\frac{x+y-2}{x y}}.$
    \item  The geometric - arithmetic index ($GA$), where $F(x, y)=\frac{2\sqrt{xy}}{x+y}.$
     \item The  harmonic index ($H$), where $F(x, y)=\frac{2}{x+y}.$
    \item  The sum - connectivity index ($\chi$), where $F(x, y)=\frac{1}{\sqrt{x+y}}.$
    \item The  logarithm of first multiplicative Zagreb index, where $F(x, y)=2\left(\frac{\ln x}{x}+\frac{\ln y}{y}\right).$

\end{itemize}

As motivation, similarly, the aforementioned approach could offer us either theoretical support or  a contrast for the findings  obtained in \cite{MMRS,DMRSV,DHHIR,LSG,Y}. Another noteworthy aspect is that the main findings may be beneficial to explore topological indices not conforming to the above form. Especially, the energy index of $G$, denoted as $\mathcal{E}(G)$ and defined by

\begin{equation}\label{ed1}
     \mathcal{E}(G)=Tr(|A(G)|),
\end{equation}
can be examined. This is due to the results presented in \cite{AA} and \cite{YLL}, which establish that

$$2R_{-1/2}(G) \leq \mathcal{E}(G) \leq 2\sqrt{\Delta(G)}R_{-1/2}(G) .$$

From that perspective, the implications of the observed long-run behavior of the classical Randić index of the Erd\H{o}s-Rényi sparse random graph suggest the conjecture that $\mathcal{E}(G)/n \to g(\lambda)$ when $n \to \infty$.  By the way, something similar was conjectured for the trees generated by the Barabasi-Albert model with parameter $\alpha$ in \cite{AD}; perhaps, the approach developed in this paper can be explored in the Barabasi-Albert model and coupled with equation (\ref{ed1}) to say something about the asymptotic behavior of the energy. As an additional point, for the dense case, it is well known that almost all ER graphs over $n$ vertices have energy of order $n^{3/2}(\frac{8}{3\pi} \sqrt{p(1-p)}+o(1))$ \cite{DLL}. Actually, since the adjacency matrix of a random graph is a random matrix, the proof relies on
one of the most important achievements in this field which is Wigner's semicircular law. In general, examining the classical Randić index of the Erdős-Rényi random graph can provide insight into the asymptotic trends of energy, such as hyperenergetic and hypoenergetic characteristics, and conversely \citep{EKYY}.

\section*{Funding Information}

Laura Eslava was supported by DGAPA-PAPIIT-UNAM grant IN-102822, Arno Siri-Jégousse was supported by DGAPA-PAPIIT-UNAM grant IN-102824 and Salyé Sigarreta was supported by CONAHCYT grant.

\end{document}